\crefname{rule}{}{}
\newtheorem{theorem}{Theorem}[section]
\newtheorem{corollary}[theorem]{Corollary}
\newtheorem{definition}[theorem]{Definition}
\newtheorem{lemma}[theorem]{Lemma}
\newtheorem{question}{Question}
\def\cqedsymbol{\ifmmode$\lrcorner$\else{\unskip\nobreak\hfil
\penalty50\hskip1em\null\nobreak\hfil$\lrcorner$
\parfillskip=0pt\finalhyphendemerits=0\endgraf}\fi}
\newcommand{\N}{\mathbb{N}}
\let\le\leqslant
\let\ge\geqslant
\let\leq\leqslant
\newcommand{\NN}{\mathbb{N}} % integers
\newcommand{\T}{\mathcal{T}}
\newcommand{\G}{\mathcal{G}}
\newcommand{\renaming}{\mathsf{Renaming}}
\newcommand{\idroot}{\mathsf{Root}}
\newcommand{\idparent}{\mathsf{Parent}}
\newcommand{\distance}{\mathsf{Distance}}
\newcommand{\subtree}{\mathsf{Subtree}}
\title{Renaming in distributed certification}
\author[N. Bousquet]{Nicolas Bousquet}
\address[N.~Bousquet]{CNRS, INSA Lyon, UCBL, LIRIS, UMR5205, F-69622 Villeurbanne, France}
\email{nicolas.bousquet@cnrs.fr}
\author[L.~Esperet]{Louis Esperet}
\address[L.~Esperet]{Université Grenoble Alpes, CNRS, Laboratoire G-SCOP,
  Grenoble, France}
\email{louis.esperet@grenoble-inp.fr}
\author[L.~Feuilloley]{Laurent Feuilloley}
\address[L.~Feuilloley]{CNRS, INSA Lyon, UCBL, LIRIS, UMR5205, F-69622 Villeurbanne, France}
\email{laurent.feuilloley@cnrs.fr}
\author[S.~Zeitoun]{Sébastien Zeitoun}
\address[S.~Zeitoun]{CNRS, INSA Lyon, UCBL, LIRIS, UMR5205, F-69622 Villeurbanne, France}
\email{sebastien.zeitoun@univ-lyon1.fr}
\thanks{The introduction and the part about global certification heavily rely on a brief announcement that appeared at PODC 2024, by a subset of the authors \cite{BousquetFZ24}. The authors are partially supported by the French ANR Projects TWIN-WIDTH
  (ANR-21-CE48-0014-01) and ENEDISC (ANR-24-CE48-7768), and by LabEx
  PERSYVAL-lab (ANR-11-LABX-0025).}
\begin{document}
\date{}

\begin{abstract}
Local certification is the area of distributed network computing asking the following question: How to certify to the nodes of a network that a global property holds, if they are limited to a local verification? 

In this area, it is often essential to have identifiers, that is, unique integers assigned to the nodes. In this short paper, we show how to reduce the range of the identifiers, in various settings. More precisely, we show how to rename identifiers in the classical local certification setting, when we can (resp.\ cannot) choose the new identifiers, and we show how a global certificate can help to encode very compactly a new identifier assignment that is not injective in general, but still useful in applications. 

We conclude with a number of  applications of these results: For every constant $\ell$, there are  local certification schemes for the properties of having clique number at most $\ell$,  having diameter at most $\ell$, and having independence number at most~2, with certificates of size $O(n)$. We also show that there is a global certification scheme for bipartiteness with a certificate of size $O(n)$. All these results are optimal.
\end{abstract}

\maketitle

%%%%%%%%%%%%%%%%%%%%%
\section{Introduction}
%%%%%%%%%%%%%%%%%%%%%

The topic of distributed certification originates from self-stabilization in distributed computing, where the nodes of a network are provided with some pieces of information called \emph{certificates}. These certificates can either be local (each node receives its own certificate), or global (there is a unique certificate, which is the same for all the nodes). The aim of the nodes is then to decide if the network satisfies a given property. To do so, each node should take its decision (accept or reject) based only on its local view in the network, which consists in its neighbors and their certificates.
The correctness requirement for a certification scheme is the following: for every network, the property is satisfied if, and only if, there exists an assignment of the certificates such that all the nodes accept.
Unsurprisingly, the parameter we want to optimize is the size of the certificates, which is usually expressed as a function of $n$, the number of nodes in the network.  
We refer to the survey~\cite{Feuilloley21} for an introduction to local certification.

\subsection*{Renaming in local certification}

In almost all the literature on distributed certification, one assumes that the nodes are equipped with unique identifiers, that is, every node is given a different integer, that it uses as a name. The classical assumption in distributed graph algorithms is that the range of these identifiers is $[n^c]:=\{1, \ldots, n^c\}$ for some constant $c>1$, where $n$ is the number of nodes. This is equivalent to saying that identifiers are binary words of length $O(\log n)$. One reason one does not use integers from $[n]$, is that it gives too much power to the algorithm, for example electing a leader can be done with no communication by electing the vertex with identifier 1.
The range $[n^c]$ is also the classical assumption in distributed certification, and having directly range $[n]$ allows to easily shave log factors from known results, as we will see at the end of the paper. 

Therefore, in general, a natural question is whether we can perform a renaming procedure efficiently, that is, starting from identifiers in some arbitrary range $[M]$, and efficiently getting to range $[n]$ (which is equivalent to have an explicit bijection from the vertex set to $[n]$). This type of task is usually studied in the context of shared-memory systems (see \emph{e.g.}~\cite{Alistarh15}), but it is also meaningful in synchronous systems. 

In the context of certification, this translates to the question: suppose every node is given as input a new integer from the smaller range, can we certify that these integers form a proper identifier assignment? We prove two results related to this question.

\begin{restatable}{theorem}{ThmSeb}
    \label{thm:seb}
    In any $n$-vertex graph, there exists an identifier renaming from range $[M]$ to range $[n]$ which can be locally certified with $O(\log M)$ bits. 
\end{restatable}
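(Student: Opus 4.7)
The plan is to base the new identifier assignment on a DFS pre-order numbering of a spanning tree. Fix any spanning tree $T$ of the graph, rooted at an arbitrary vertex, and at each internal node order the children by increasing old identifier. The pre-order DFS traversal of $T$ respecting this order assigns to each node $v$ a unique integer $\pi(v) \in [n]$, which we take as its renamed identifier.

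To certify this, each certificate combines a standard spanning-tree certificate with DFS bookkeeping. Explicitly, node $v$ stores: its depth $d(v)$ in $T$, the old identifier $p(v)$ of its parent in $T$, the old identifier $r$ of the root, the new identifier $\pi(v)$, and the size $s(v)$ of the subtree of $T$ rooted at $v$. Each of these values lies in $[M]$, so the whole certificate fits in $O(\log M)$ bits.

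The local verification at $v$ has three parts. First, a standard spanning-tree check: either $d(v)=0$ and $v$'s own old identifier equals $r$, or some neighbor $u$ of $v$ has old identifier $p(v)$, depth $d(v)-1$, and the same root label $r$. Second, $v$ identifies its children in $T$ as the neighbors that declare $v$ as their parent (unambiguous since old identifiers are unique), sorts them as $c_1, \ldots, c_k$ by increasing old identifier, and checks $s(v) = 1 + \sum_i s(c_i)$. Third, $v$ checks the DFS constraints $\pi(c_1) = \pi(v) + 1$ and $\pi(c_{i+1}) = \pi(c_i) + s(c_i)$ for $i < k$, while the root additionally checks that $\pi=1$.

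Correctness follows by induction on the subtree structure: if all checks pass, the new identifiers occurring in the subtree of $v$ are exactly $\{\pi(v), \pi(v)+1, \ldots, \pi(v)+s(v)-1\}$, hence are pairwise distinct. Applied at the root (where $\pi=1$ and $s=n$), this certifies that $\pi$ is a bijection from $V$ to $[n]$. The main subtlety I foresee is that the DFS-order check at $v$ must be performed purely from $v$'s local view; this works because $v$ sees the full certificates of all its children and can therefore read off their new identifiers and subtree sizes to verify the pre-order relations directly, without any further communication.
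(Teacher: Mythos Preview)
Your approach is essentially the same as the paper's: a DFS pre-order numbering along a rooted spanning tree (children ordered by original identifier), certified via root pointer, parent pointer, depth, and subtree size, with the parent checking the pre-order relations among its children. The certificate size and the inductive correctness argument match the paper's.

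There is, however, one genuine gap in your verification. Your ``standard spanning-tree check'' only compares the root label $r$ between a node and its \emph{parent}. This is not enough to force a single spanning tree: a dishonest prover can split the graph into two vertex-disjoint rooted trees $T_1,T_2$ (covering all vertices) whose only connecting edges are non-tree edges. Each root declares $\pi=1$; within each $T_j$ the subtree-size and DFS checks are internally consistent; and on the non-tree edge between $T_1$ and $T_2$ neither endpoint is the other's parent or child, so no root-label comparison is made. All nodes accept, yet $\pi$ is not injective (both roots get label $1$). Concretely, on a path $v_1v_2v_3v_4$ one can root $v_1,v_2$ at $v_1$ and $v_3,v_4$ at $v_4$, obtaining $\pi=(1,2,2,1)$ with every one of your checks satisfied. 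This also undermines your correctness claim that ``at the root $s=n$'': nothing in your checks forces this.

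The fix is exactly what the paper does: have every vertex check that \emph{all} its neighbors carry the same root label (not just the parent). Connectivity of $G$ then forces a unique root, hence a genuine spanning tree, and your inductive argument (which is otherwise fine) goes through with $s(\text{root})=n$.
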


A direct consequence of this theorem is that for local certification with certificates of size $\Omega(\log n)$, assuming identifiers in $[n^c]$ or $[n]$ is equivalent. Indeed, when nodes are given identifiers in $[n^c]$, one can encode in the certificates a new identifier assignment from $[n]$ as well as its certification with $O(\log n)$ bits. 

In Theorem~\ref{thm:seb}, the new identifier assignment has a specific form that is dictated by the structure of the graph. In some cases, we cannot choose this new identifier assignment, as it might be given by an adversary. We prove that such a renaming can still be certified compactly, albeit with significantly larger certificates.

\begin{restatable}{theorem}{ThmLouis}\label{thm:ThmLouis}
    In any $n$-vertex graph, any identifier renaming from range $[M]$ to range $[n]$ can be locally certified with $O(n + \log M)$ bits. 
\end{restatable}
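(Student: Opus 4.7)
The plan is to reduce the problem to a standard spanning-tree aggregation: certify a spanning tree, then attach to each node an $n$-bit characteristic vector describing which new identifiers appear in its subtree, and force these vectors to be consistent up the tree.

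More precisely, I would put into the certificate of each node $v$ the following data: (i) a standard spanning tree certificate with some fixed root $r$ (for instance the node with smallest old identifier), consisting of the root's old identifier, the distance from $v$ to $r$, and the old identifier of $v$'s parent in the tree --- this costs $O(\log M)$ bits; and (ii) a vector $\sigma(v) \in \{0,1\}^n$ that is intended to encode the set $S(v) \subseteq [n]$ of new identifiers appearing in the subtree $T_v$ rooted at $v$ --- this costs $n$ bits. Total certificate size: $O(n + \log M)$.

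The local verification at a node $v$ proceeds as follows. First, $v$ checks all standard BFS/spanning-tree consistency conditions with its neighbors (same root identifier, distances differing by one along tree edges, parent pointer pointing to a genuine neighbor); this certifies that the parent pointers define a spanning tree. Next, letting $C(v) = \{u \in N(v) : u \text{ claims } v \text{ as its parent}\}$, the node $v$ checks the recursive identity
\[
  \sigma(v) \;=\; \{\text{new-id}(v)\} \;\sqcup\; \bigsqcup_{u \in C(v)} \sigma(u),
\]
interpreted as a \emph{disjoint} union: the bit of $\sigma(v)$ at position $\text{new-id}(v)$ is $1$, this bit is $0$ in every $\sigma(u)$ for $u \in C(v)$, and for any two distinct children $u_1, u_2 \in C(v)$ the vectors $\sigma(u_1)$ and $\sigma(u_2)$ have disjoint supports, while their union together with bit $\text{new-id}(v)$ equals $\sigma(v)$. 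Finally, the root $r$ additionally checks that $\sigma(r)$ is the all-ones vector of length $n$, and every node checks that its proposed new identifier lies in $[n]$.

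Correctness then follows by an easy induction from the leaves up: once the tree is certified, the disjoint-union checks force $\sigma(v)$ to be exactly the characteristic vector of the multiset of new identifiers appearing in $T_v$, and simultaneously force that multiset to be a set (no repetitions). At the root, $\sigma(r)$ is the characteristic vector of all new identifiers in the graph and equals the all-ones vector of $[n]$, so the $n$ new identifiers are pairwise distinct and cover $[n]$, i.e.\ the renaming is a bijection. Conversely, if the renaming is indeed a bijection, the prover assigns $\sigma(v)$ to be the true characteristic vector of the new identifiers in $T_v$ and all checks pass. I expect the only subtle point to be bookkeeping in the disjoint-union check --- a node really does see its entire set of children through the parent pointers, so this check is truly local --- while the rest of the argument is routine tree-aggregation.
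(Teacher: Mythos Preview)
Your proposal is correct and is essentially the same as the paper's proof: a spanning-tree certificate of $O(\log M)$ bits together with an $n$-bit characteristic vector $\sigma(v)$ of the new identifiers in the subtree $T_v$, aggregated via a disjoint-union check and capped by an all-ones test at the root. The only minor omission is an explicit check that all neighbors' vectors have the same length (the paper's step~(iii)), which you implicitly rely on in the disjoint-union test; once that is added, the argument that the common length must in fact equal $n$ (since the root's all-ones vector has exactly $n$ ones by the disjointness) goes through exactly as you sketch.
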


This theorem is a generalization of a similar result restricted to  paths and identifier range~$[n]$ that appeared in~\cite{KormanKP10}. The same paper proves a lower bound of $\Omega(n)$.  Note that the setting of Theorem \ref{thm:ThmLouis} is equivalent to the \textsc{Permutation} problem, where the nodes of the network are given integers from $[n]$ and they need to certify that these values form a permutation of $[n]$ (see \cite{NPY20} for more on this problem, in the setting of interactive certification protocols). We will see several simple applications of Theorems \ref{thm:seb} and \ref{thm:ThmLouis} in Section \ref{sec:applications}: for every $\ell$, there are  local certification schemes for $K_\ell$-freeness (the property of not containing a clique of size $\ell$) and diameter at most $\ell$, with certificates of size $O(n)$. These results are optimal, and improve earlier results by a logarithmic factor. We will also prove a similar result for the property of having independence number at most~2.

We will also show that Theorem~\ref{thm:ThmLouis} can be combined with hashing techniques described in the next paragraph to provide an efficient certification protocol for the \textsc{Distinctness} problem (see Section~\ref{sec:applications}). In this problem, also studied in \cite{NPY20}, the nodes are given values from $[m]$ and they have to certify that the values are pairwise distinct. \textsc{Permutation} and \textsc{Distinctness} are useful in the context of isomorphism or non-isomorphism problems~\cite{NPY20}.
This efficient certification protocol for \textsc{Distinctness} will also enable us to point out a link between distributed certification and universal graphs (namely, how a universal graph for some monotone class can be used to obtain a certification protocol for this class).

\subsection*{Global certification and perfect hashing}

As mentioned above, there are two natural levels of locality in certification. In the first case, the certificates are local, and the verification is local too; in the second case, the certificates are global, but the verification remains local. When speaking about local or global certification, we thus refer to the locality or globality of the certificates (and not of the verification, which is always local). In general, these two levels of certification are related, because bounds for one can be derived from bounds for the other. Namely, a global certification scheme is a particular case of a local one, and conversely, a local certification scheme can be transformed into a global one by giving as global certificate the list of the local certificates of each node in the network (so that each node can simulate the local certification scheme by recovering its own local certificate from the global one, see~\cite{FeuilloleyH18} for more details).
However, these generic transformations are usually not optimal.

The reason we mention the relation between global and local certification, is that identifiers play a significant role there. Consider the example of certifying that a graph is bipartite. This can be done in  local certification by assigning to every node a single bit indicating its part in the bipartition. As observed in \cite{FeuilloleyH18}, we can translate this into a global certificate of size $\Theta(n\log n)$ consisting of all the pairs (identifiers, bit). Authors in~\cite{FeuilloleyH18} conjectured that this is optimal in the standard case where the range of identifiers is polynomial in $n$ (see also Open Problem 9 in~\cite{Feuilloley21}). In other words, they conjectured that the answer to the following question is positive:

\begin{restatable}{question}{ConjectureGlobalBipartiteness}
	\label{conj:n*log n}
	Is it true that the optimal size for global certification of \textsc{Bipartiteness} is $\Theta(n \log n)$ ?
\end{restatable}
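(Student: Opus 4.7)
The upper bound $O(n\log n)$ is immediate: the prover fixes any proper $2$-coloring $\chi$ of $G$ and writes the global certificate as the concatenation of the $n$ pairs $(\mathrm{id}(v),\chi(v))$, each of size $O(\log n)$. Every node, seeing the full certificate, extracts $\chi(v)$ for itself and for each neighbor via the identifiers in the certificate, and rejects iff some incident edge is monochromatic. Completeness is clear; for soundness, if all nodes accept then every edge is bichromatic, so the function $\chi$ read from the certificate is a proper $2$-coloring of $G$.

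The difficult direction is the lower bound $\Omega(n\log n)$. My plan is an adversarial counting argument. Fix a vertex set $V=A\cup B$ with $|A|=|B|=n/2$ and an arbitrary identifier assignment in $[n^c]$, and let $\mathcal{F}$ be the family of perfect matchings between $A$ and $B$; then $|\mathcal{F}|=(n/2)!$ and $\log|\mathcal{F}|=\Omega(n\log n)$, and every graph in $\mathcal{F}$ is bipartite. If a global certification scheme uses certificates of size $s<\log|\mathcal{F}|$, the pigeonhole principle gives two matchings $M_1\neq M_2$ accepted by the same certificate $c$. To derive a contradiction, one would produce a non-bipartite graph $G^{\star}$ on $V$ (with the same identifier assignment) for which every node still accepts the certificate $c$, violating soundness.

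The crux and main obstacle is precisely the construction of $G^{\star}$. The natural template is to look at $M_1\triangle M_2$, which decomposes into alternating cycles of even length, and to splice pieces of $M_1$ and $M_2$ with one extra toggled edge to form an odd cycle, while preserving every node's local view. The difficulty is that swapping edges changes which identifiers a vertex sees among its neighbors, so the local check, which inspects the full certificate together with the neighborhood, may no longer pass. A plausible refinement is to replace matchings by a family of graphs built from many copies of a small local gadget admitting two certified configurations, arranged so that toggling an odd number of them introduces an odd cycle while leaving every node's neighborhood identifier-by-identifier unchanged. Designing such a gadget that is truly invisible to an arbitrary verification algorithm is the central difficulty, and the reason Conjecture~\ref{conj:n*log n} has remained open.
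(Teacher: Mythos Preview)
Your proposal attempts to \emph{prove} the conjecture, but the paper \emph{disproves} it. Conjecture~\ref{conj:n*log n} is stated only to be refuted: Corollary~\ref{cor:new_bound} (a special case of Theorem~\ref{thm:graph_homomorphism}) gives a global certification scheme for \textsc{Bipartiteness} of size $O(n+\log\log M)$, which is $O(n)$ when $M$ is polynomial in $n$. Hence the $\Omega(n\log n)$ lower bound you are trying to establish is simply false, and no amount of gadget engineering can rescue the argument.

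Concretely, here is where your counting plan breaks. The pigeonhole step only tells you that many bipartite matchings share a certificate $c$; to contradict soundness you must exhibit a \emph{non-bipartite} graph on which every vertex still accepts $c$. But the verifier has access to the entire global certificate together with the identifiers of its neighbors, so any edge swap that changes a vertex's set of neighbor identifiers is detectable. The paper's upper bound shows why no ``invisible'' odd-cycle gadget can exist: the prover can write a perfect hash function $h:[M]\to[n]$ for the actual identifier set (encodable in $O(n+\log\log M)$ bits by Theorem~\ref{thm:perfect_hashing}) together with an $n$-bit color vector indexed by $h(\mathrm{Id}(\cdot))$; each vertex recovers its own color and its neighbors' colors and checks bichromaticity. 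This scheme uses $O(n)$ bits and is sound, so the family $\mathcal{F}$ of $(n/2)!$ matchings is in fact covered by $2^{O(n)}$ certificates, not $2^{\Omega(n\log n)}$, and your pigeonhole inequality never triggers a collision you can exploit.
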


In other words, they conjectured that there is no way to compress the identifiers in the global certificate. Here, we prove that this conjecture is false: one can actually use only $\Theta(n)$ bits in this setting (in fact, we prove a more general result in terms of graph homomorphisms, see Theorem~\ref{thm:graph_homomorphism}).
To prove it, we used a technique of perfect hashing. Intuitively, for bipartiteness, the global certificate consists in (1) a bijection $h$ between the set of identifiers and $[n]$, that can be encoded compactly, and (2) a list of $n$ bits, where the $i$-th bit corresponds to the color in a proper 2-coloring of the vertex whose identifier is mapped to $i$ by the bijection~$h$.
The key ingredient is Theorem~\ref{thm:perfect_hashing}, which ensures that such a bijection~$h$ (called a \emph{perfect hash function}) can be encoded compactly.

This technique of perfect hashing does not consist precisely in a renaming as we will define it in Section~\ref{sec:definitions}, but the idea is similar, namely compress the identifiers in range~$[n]$. We also present another application of this hashing tool in Theorem~\ref{thm:distinctness}, which gives an efficient local certification scheme for the \textsc{Distinctness} problem.

%%%%%%%%%%%%%%%%%%%%%%%%
\section{Models and definitions}
\label{sec:definitions}
%%%%%%%%%%%%%%%%%%%%%%%%

For completeness, we now introduce some basic graph definitions. All the graphs we consider are finite, simple, non-oriented, and connected. Let $G$ be a graph. The vertex set and edge set are denoted by $V(G)$ and $E(G)$, respectively. For every $u \in V(G)$, we denote by $N(u)$ the \emph{open neighborhood} of $u$, which is set of vertices $v \in V$ such that $uv \in E(G)$, and by $N[u]$ the \emph{closed neighborhood} of $u$, which is $N(u) \cup \{u\}$.
A \emph{homomorphism} from a graph $G$ to a graph $H$ is a function $\varphi : V(G) \rightarrow V(H)$ such that, for every edge $uv \in E(G)$, we have $\varphi(u) \varphi(v) \in E(H)$. Note that homomorphisms generalize colorings, since a graph $G$ is $k$-colorable if and only if exists a homomorphism from $G$ to the complete graph on $k$ vertices (in particular, $G$ is bipartite if and only if there exists a homomorphism from $G$ to an edge).

A \emph{graph with inputs} is a pair $(G, i)$ where $G$ is a graph, and $i$ is a mapping $V(G) \rightarrow \mathcal{I}$, where $\mathcal{I}$ is a set called the set of \emph{inputs} and $i$ the \emph{input function}.
A \emph{renaming with range $\mathcal{I}$} of a graph $G$ is an \emph{injective} input function $i : V(G) \rightarrow \mathcal{I}$. In particular, for any $n$-vertex graph $G$, a renaming with range $[n]$ is precisely a permutation $V(G)\to [n]$.

Now, let us define formally the model of certification. Let $M : \NN \rightarrow \NN$, called the \emph{identifier range} (which is fixed: it is part of the framework for which certification schemes will be designed). Let $n = |V(G)|$. In the following, we just write $M$ instead of $M(n)$ for the sake of readability.
An \emph{identifier assignment} of $G$ is an injective mapping \mbox{$\mathrm{Id} : V \rightarrow [M]$}.
Finally, let $C$ be a set, called the set of \emph{certificates}.
We will define two different models of certification, namely \emph{local} and~\emph{global} certification.

\begin{definition}[Local certification]
    In local certification, a \emph{certificate assignment} for a graph $G$ (possibly equipped with an input function $i$) is a mapping $c : V(G) \rightarrow C$. Given an identifier assignment $\mathrm{Id}$, a certificate assignment~$c$, and a vertex $u \in V(G)$, the \emph{view of $u$} consists in all the information available in its neighborhood, that~is:
    \begin{itemize}
        \item its own identifier $\mathrm{Id}(u)$;
        \item its own input $i(u)$ (if $G$ is a graph with inputs); 
		\item its own certificate $c(u)$;
        \item the set of identifiers, inputs (if $G$ is a graph with inputs) and certificates of its neighbors, which is $\{(\mathrm{Id}(v), i(v), c(v)) \; | \; v \in N(u)\}$.
    \end{itemize}
\end{definition}

\begin{definition}[Global certification]
    In global certification, a \emph{global certificate} is a certificate $c \in C$. Given a graph $G$ (possibly equipped with an input function $i$), an identifier assignment $\mathrm{Id}$, a global certificate~$c$, and a vertex $u \in V(G)$, the \emph{view of $u$} consists in all the information available in its neighborhood, that~is:
    \begin{itemize}
        \item its own identifier $\mathrm{Id}(u)$;
        \item its own input $i(u)$;
		\item the global certificate~$c$;
        \item the set of identifiers and inputs (if $G$ is a graph with inputs) of its neighbors, which is $\{(\mathrm{Id}(v), i(v)) \; | \; v \in N(u)\}$.
    \end{itemize}
\end{definition}

A \emph{verification algorithm} is a function which takes as input the view of a vertex, and outputs a decision (\emph{accept} or \emph{reject}).

Let $\mathcal{P}$ be a property on graphs (possibly with inputs). We say that there is a local certification scheme (resp.\ a global certification scheme) with size $s(n)$ and identifier range~$M$ if there exists a verification algorithm $A$ such that, for all $n \in \NN$, there exists a set $C$ of size $2^{s(n)}$ (equivalently, the elements of $C$ can be seen as all binary words of $s(n)$ bits) satisfying the following condition: for every graph $G$ (possibly equipped with an input function~$i$) with $n$ vertices, $G$ satisfies $\mathcal{P}$ if and only if, for every identifier assignment $\mathrm{Id}$ with range $M$, there exists a certificate assignment $c : V(G) \rightarrow C$ (resp. a global certificate $c \in C$) such that $A$ accepts on every vertex.
We will often say that the certificates are given by a \emph{prover} which, intuitively, tries to convince the vertices that $\mathcal{P}$ is satisfied, but who can succeed only if $G$ indeed satisfies $\mathcal{P}$. 

Note that a verification algorithm is just a function, with no more requirements. In particular, it does not have to be decidable. However, in practice, when designing a certification scheme to prove upper bounds, it turns out to be decidable and often computable in polynomial time. The fact that no assumptions are made on this verification function in the definition just strengthens the results when proving lower bounds, by showing that it does not come from computational limits.

\medskip

We conclude these preliminaries with the definition of perfect hashing and a classical result on the size of perfect hash families.
\begin{definition}
	\label{def:perfect_hashing}
	Let $k, \ell \in \NN$ with $k \leqslant \ell$, and let $\mathcal{H}$ be a set of functions \mbox{$[\ell] \rightarrow [k]$}.
	\begin{enumerate}
		\item A function $h \in \mathcal{H}$ is a \emph{perfect hash function} for $S \subseteq [\ell]$ if $h(x) \neq h(y)$ for all $x, y \in S$, $x \neq y$.
		\item The family of functions $\mathcal{H}$ is a $(k, \ell)$-perfect hash family if, for every $S \subseteq [\ell]$ with $|S|=k$, there exists $h \in \mathcal{H}$ which is perfect for $S$.
	\end{enumerate}
\end{definition}

 We will need the following Theorem~\ref{thm:perfect_hashing} (see e.g. \cite{Mehlhorn84}, Chapter~3, Theorem~7 for a proof).

\begin{theorem}
	\label{thm:perfect_hashing}
	Let $k, \ell \in \NN$ with $k \leqslant \ell$. There exists a $(k, \ell)$-perfect hash family $\mathcal{H}_{k, \ell}$ which has size $\lceil k e^k \log \ell \rceil$.
\end{theorem}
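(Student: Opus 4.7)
The plan is to use the probabilistic method: sample $N := \lceil k e^k \log \ell \rceil$ functions $h_1, \ldots, h_N \colon [\ell] \to [k]$ independently and uniformly at random, and show that with positive probability the resulting family $H = \{h_1, \ldots, h_N\}$ is a $(k,\ell)$-perfect hash family. By Definition~\ref{def:perfect_hashing}, it suffices to guarantee that for every $S \in \binom{[\ell]}{k}$ at least one of the sampled functions is perfect on $S$.

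First I would compute the probability that a single uniformly random $h$ is perfect on a fixed set $S$ of size $k$: this is exactly the probability that a uniformly random function $[k] \to [k]$ is a bijection, namely $k!/k^k$. From the Taylor expansion $e^k = \sum_{j \geq 0} k^j/j! \geq k^k/k!$, I get the clean lower bound $k!/k^k \geq e^{-k}$. Hence the probability that none of the $N$ independent samples is perfect on $S$ is at most $(1 - e^{-k})^N \leq \exp(-N e^{-k})$, using $1-x \leq e^{-x}$.

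Next I would apply a union bound over all $\binom{\ell}{k} \leq \ell^k$ sets $S$ of size $k$: the probability that $H$ fails to be a perfect hash family is at most $\ell^k \exp(-N e^{-k}) = \exp(k \ln \ell - N e^{-k})$. With $N = \lceil k e^k \log \ell \rceil$ (reading $\log$ as the natural logarithm, or noting that the bound only improves with $\log_2$), this is strictly less than $1$. Consequently some realization of the random family is a valid $(k,\ell)$-perfect hash family of the required size.

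The main obstacle is essentially bookkeeping: one needs a lower bound on $k!/k^k$ clean enough to align with the stated size $\lceil k e^k \log \ell \rceil$. The Taylor estimate $k!/k^k \geq e^{-k}$ is loose by a factor of order $\sqrt{k}$ compared to Stirling, but this slack is comfortably absorbed by the ceiling together with the looseness of $\binom{\ell}{k} \leq \ell^k$, so no sharper estimate is needed; beyond this, the argument is the textbook probabilistic construction.
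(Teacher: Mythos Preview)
The paper does not actually prove this theorem; it merely states it and refers the reader to Mehlhorn's textbook for a proof. Your probabilistic argument is the standard textbook construction (and essentially what one finds in the cited reference): sample uniformly random functions, bound the per-set failure probability via $k!/k^k \ge e^{-k}$, and take a union bound over the $\binom{\ell}{k} \le \ell^k$ subsets. The computation is correct; the only spot that deserves a word of care is the strict inequality at the end, but as you note, the slack in $\binom{\ell}{k} < \ell^k$ for $k\ge 2$ (and the triviality of the case $k=1$) handles it.
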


A consequence of Theorem~\ref{thm:perfect_hashing} is that, to encode a hash function $h : [M] \rightarrow [n]$ which is perfect for a given set $S \subseteq M$ of size $n$ (for instance, the set of identifiers of a $n$-vertex graph), we just need $\log \lceil n e^n \log M\rceil = O(n + \log \log M)$ bits. See Theorems~\ref{thm:distinctness} and~\ref{thm:graph_homomorphism} for examples of applications.

%%%%%%%%%%%%%%%%%%%%%%%%%%%%%%%%%%%%%%%%%%%
\section{Renaming theorems}
\subsection{Certifying some renaming to \texorpdfstring{$[n]$}{[n]} with \texorpdfstring{$O(\log M)$}{O(log M)} bits}
\label{sec:Seb}
%%%%%%%%%%%%%%%%%%%%%%%%%%%%%%%%%%%%%%%%%%%

In this section we prove Theorem \ref{thm:seb}, which we restate here for convenience.

%%%%%%%%%%
\ThmSeb*
%%%%%%%%%

\begin{proof}
    More precisely, we will prove that the following holds: there exists a local certification scheme with $O(\log M)$ bits such that
    \begin{enumerate}
        \item if all the vertices accept with some certificate assignment, then the renaming written in the input is a correct renaming of range $[n]$, and
        \item there exists a renaming of range $[n]$ such that all the vertices accept with some certificate assignment.
    \end{enumerate}

    Recall that for an $n$-vertex graph $G$, a renaming with range $[n]$ is precisely a permutation $V(G)\to [n]$. So the two items above can be rewritten as: (1) if all vertices accept, then the input is a permutation of $ [n]$ and (2) there exists a permutation of $[n]$ such that if it is given in input, then all vertices accept.
    
    \subsection*{Description of a renaming that can be certified with $O(\log M)$ bits.}
    Let $G$ be an $n$-vertex graph. For every $u \in V(G)$, let us denote by $\mathrm{Id}(u)$ the original identifier of~$u$, and by $\renaming(u)$ the renaming of $u$ that is written in its input.
    Let us describe a renaming in range $[n]$ that can be certified using $O(\log M)$ bits.
    Let $\T$ be a spanning tree of~$G$, rooted at some vertex~$r$. For every vertex $u \in V(G)$, let us denote by $\T_u$ the subtree of~$\T$ rooted at $u$, and by $|\T_u|$ its number of vertices.
    We set $\renaming(r):=1$. Now, let $u \in V(G)$ with children  $v_1, \ldots, v_k$ in~$\T$ ordered by increasing (original) identifiers $\mathrm{Id}(v_1), \ldots, \mathrm{Id}(v_k)$. We define $\renaming(v_1):=\renaming(u)+1$, and for every $i \in \{2, \ldots, k\}$, $\renaming(v_i):=\renaming(v_{i-1})+|\T_{v_{i-1}}|$.
    In other words, this renaming corresponds to the order in which the vertices would be visited in a depth-first-search in~$\T$, starting at~$r$. See Figure~\ref{fig:1} for an example.

		\begin{figure}[h]
			\centering
			
			\begin{tikzpicture}
				\scriptsize
				\node[circle, draw] (1) at (-1.5, 0) {1};
				\node[circle, draw] (2) at (-6, -2) {2};
				\node[circle, draw] (9) at (-3, -2) {9};
				\node[circle, draw] (15) at (0, -2) {15};
				\node[circle, draw] (16) at (3, -2) {16};
				\node[circle, draw] (3) at (-7, -4) {3};
				\node[circle, draw] (6) at (-6, -4) {6};
				\node[circle, draw] (7) at (-5, -4) {7};
				\node[circle, draw] (10) at (-3.5, -4) {10};
				\node[circle, draw] (11) at (-2.5, -4) {11};
				\node[circle, draw] (17) at (2.5, -4) {17};		
				\node[circle, draw] (18) at (3.5, -4) {18};
				\node[circle, draw] (4) at (-7.5, -6) {4};
				\node[circle, draw] (5) at (-6.5, -6) {5};
				\node[circle, draw] (8) at (-5, -6) {8};
				\node[circle, draw] (12) at (-3.5, -6) {12};
				\node[circle, draw] (13) at (-2.5, -6) {13};
				\node[circle, draw] (14) at (-1.5, -6) {14};
				\node[circle, draw] (19) at (3, -6) {19};
				\node[circle, draw] (20) at (4, -6) {20};
				
				\draw (1) edge node{} (2);
				\draw (1) edge node{} (9);
				\draw (1) edge node{} (15);
				\draw (1) edge node{} (16);
				\draw (2) edge node{} (3);
				\draw (2) edge node{} (6);
				\draw (2) edge node{} (7);
				\draw (3) edge node{} (4);
				\draw (3) edge node{} (5);
				\draw (7) edge node{} (8);
				\draw (9) edge node{} (10);
				\draw (9) edge node{} (11);
				\draw (11) edge node{} (12);
				\draw (11) edge node{} (13);
				\draw (11) edge node{} (14);
				\draw (16) edge node{} (17);
				\draw (16) edge node{} (18);
				\draw (18) edge node{} (19);
				\draw (18) edge node{} (20);
				
				\node at (-1.9, 0.2) {\textcolor{red}{20}};
				\node at (-6.35, -1.8) {\textcolor{red}{7}};
				\node at (-3.35, -1.8) {\textcolor{red}{6}};
				\node at (-0.4, -1.8) {\textcolor{red}{1}};
				\node at (3.45, -1.8) {\textcolor{red}{5}};
				\node at (-7.35, -3.8) {\textcolor{red}{3}};
				\node at (-6.35, -3.8) {\textcolor{red}{1}};
				\node at (-5.35, -3.8) {\textcolor{red}{2}};
				\node at (-3.9, -3.8) {\textcolor{red}{1}};
				\node at (-2.9, -3.8) {\textcolor{red}{4}};
				\node at (2.1, -3.8) {\textcolor{red}{1}};
				\node at (3.1, -3.8) {\textcolor{red}{3}};
				\node at (-7.85, -5.8) {\textcolor{red}{1}};
				\node at (-6.85, -5.8) {\textcolor{red}{1}};
				\node at (-5.35, -5.8) {\textcolor{red}{1}};
				\node at (-3.9, -5.8) {\textcolor{red}{1}};
				\node at (-2.9, -5.8) {\textcolor{red}{1}};
				\node at (-1.9, -5.8) {\textcolor{red}{1}};
				\node at (2.6, -5.8) {\textcolor{red}{1}};
				\node at (3.6, -5.8) {\textcolor{red}{1}};
			\end{tikzpicture}
			
			\caption{Illustration of the renaming. The edges which are drawn here are those of $\T$. For each internal node, its children are written in increasing order of the original identifiers. The integer in every node $u$ corresponds to $\renaming(u)$. The red integer near a node $u$ corresponds to $|\T_u|$. (The original identifiers are omitted.)}
			\label{fig:1}
		\end{figure}
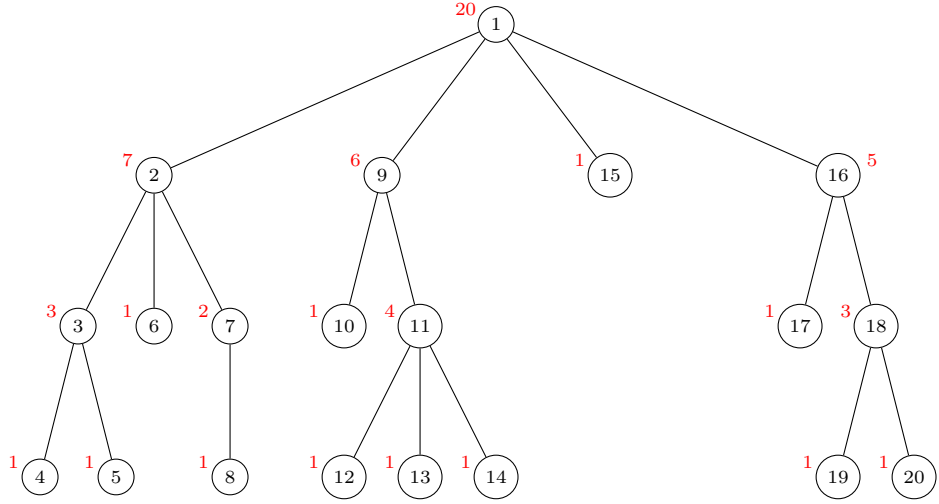

    \subsection*{Certification of the renaming.}
    Let us describe the certificates given by the prover to the vertices. Let $\mathcal{T}$ be the spanning tree of $G$ which gives the renaming described previously.
    %\textcolor{red}{On correct instances, it does not choose it, right? Or maybe there can be several good trees?}
    For every vertex $u \in V(G)$, we denote by $\T_u$ the subtree of $\T$ rooted at $u$. The certificate of every vertex~$u$ consists in four parts, denoted by $\idroot[u]$, $\idparent[u]$, $\distance[u]$, $\subtree[u]$, which are defined as follows.
    \begin{itemize}
        \item In $\idroot[u]$, the prover writes the identifier $\mathrm{Id}(r)$ of $r$.
        \item In $\idparent[u]$, the prover writes the identifier $\mathrm{Id}(w)$ of the parent $w$ of~$u$ in~$\T$ (if $u \neq r$) and $\mathrm{Id}(r)$ otherwise.
        \item In $\distance[u]$, the prover writes the distance from $u$ to $r$ in $\T$.
        \item In $\subtree[u]$, the prover writes the number of vertices in $\T_u$.
    \end{itemize}
    
    Note that all of these four parts can be encoded with $O(\log M)$ bits (because $n \leqslant M$). Thus, the overall size of the certificates is $O(\log M)$.
    
    \subsection*{Verification of the certificates.}
    Let us now explain how the vertices check the correctness of their certificates. If $u$ and $v$ are two neighbors such that $\idparent[v]=\mathrm{Id}(u)$, we say that $u$ is the \emph{parent} of $v$, or equivalently that $v$ is a \emph{child} of $u$. If $u$ does not have any children, we say that $u$ is a \emph{leaf}. The verification procedure of every vertex $u \in V(G)$ consists in several steps, which are the following ones. At each step, if the verification fails, $u$ rejects. If $u$ did not reject at any step, it finally accepts.
    \begin{enumerate}[(i)]
        \item First, $u$ checks that all its neighbors have the same root, that is $\idroot[u]=\idroot[v]$ for every neighbor $v$ of $u$.

        If no vertex rejects at this point, then $\idroot$ is the same in the certificates of all the vertices, so in the following we will denote by $\idroot$ this common value written in the certificates.

        \item If $\mathrm{Id}(u) \neq \idroot$, $u$ checks that it is a neighbor of its parent~$w$, and it also checks that $\distance[w]=\distance[u]-1$. 

        \item If $u$ is a leaf, it checks that $\subtree[u]=1$.
        Otherwise, $u$ checks that $\subtree[u] = 1 + \Sigma_{i=1}^{k} \subtree[v_i]$, where $v_1, \ldots, v_k$ are the children of $u$. 

        \item Finally, $u$ checks the correctness of the renaming. Namely, if $\mathrm{Id}(u) = \idroot$, $u$ checks that $\renaming[u]=1$. If $u$ is not a leaf and has $v_1, \ldots, v_k$ as children, where the identifiers $\mathrm{Id}(v_1), \ldots, \mathrm{Id}(v_k)$ are in increasing order, it checks that $\renaming[v_1]=\renaming[u]+1$, and that for all $i \in \{2, \ldots, k\}$, $\renaming[v_i]=\renaming[v_{i-1}]+\subtree[v_{i-1}]$.
    \end{enumerate}

\subsection*{Correctness.}

Let us prove the correctness of this certification scheme.
First, assume that the renaming is the one described above, and that the prover gives the correct certificates to the vertices.
Then, by definition of the renaming and the certificates, no vertex will reject in the verification procedure, so all the vertices accept.

Conversely, assume that all the vertices accept with some renaming and some certificates, and let us check that the renaming is correct.
First, note that $G$ contains a vertex having the identifier $\idroot$. Let $r$ be the vertex of $G$ such that $\distance[r]$ is minimal in~$G$. Since $r$ accepts, we have $\mathrm{Id}(r)=\idroot$, else $r$ would reject at step~(ii). Then, we can reconstruct the spanning tree~$\T$ that the prover used to assign the certificates: the root is~$r$, and a vertex $v$ is a child of $u$ if $u$ and $v$ are neighbors and $\idparent[v]=\mathrm{Id}(u)$. Step~(ii) ensures that this definition leads to a correct spanning tree~$\T$. Then, step~(iii) ensures that for each vertex~$u$, we have $\subtree[u]=|\T_u|$. Finally, step~(iv) guarantees that $\renaming$ corresponds to the order in which the vertices would be visited in a depth-first search starting at~$r$, which is indeed a correct renaming in $[n]$.
\end{proof}

%%%%%%%%%%%%%%%%%%%%%%%%%%%%%%%%%%%%%%%%%%%%
\subsection{Certifying any renaming to \texorpdfstring{$[n]$}{[n]} with \texorpdfstring{$O(n + \log M)$}{O(n+log M)} bits}
\label{sec:Louis}
%%%%%%%%%%%%%%%%%%%%%%%%%%%%%%%%%%%%%%%%%%%
In this section we prove Theorem \ref{thm:ThmLouis}, which we restate here for convenience.
%%%%%%%%%%
\ThmLouis*
%%%%%%%%%

\begin{proof}
More precisely, we prove the following: there exists a local certification scheme with $O(n + \log M)$ bits such that
    \begin{enumerate}
        \item if all the vertices accept with some certificate assignment, then the renaming written in the input is a correct renaming of range $[n]$, and
        \item for all renamings of range $[n]$, all the vertices accept with some certificate assignment.
    \end{enumerate}

    As before, the two items above can be rewritten in the language of permutation as follows: (1) if all vertices accept, then the input is a permutation of $ [n]$ and (2) for any permutation of $[n]$ in input, all vertices accept.

    \medskip

We will indeed prove the more general Theorem~\ref{thm:partition}, from which Theorem~\ref{thm:ThmLouis} follows as a simple consequence.

\begin{theorem}
    \label{thm:partition}
    Let $G$ be $n$-vertex graph. Assume that each vertex~$u$ of~$G$ is given as input a set of integers~$S_u \subseteq [n]$. Then, certifying that $\{S_u\}_{u \in V(G)}$ is a partition of $[n]$ (i.e., that these sets are pairwise disjoint and their union is equal to $[n]$) can be done with $O(n + \log M)$ bits.
\end{theorem}

Assuming Theorem~\ref{thm:partition}, the proof of Theorem~\ref{thm:ThmLouis} is immediate, since it is just the particular case where the input set~$S_u$ of each vertex~$u$ is a singleton. 

    \medskip

\noindent{\emph{Proof of Theorem \ref{thm:partition}}.} In the following, we assume that all the vertices know the number of vertices~$n$, since it can be certified with $O(\log n)$ bits (as in the proof of Theorem~\ref{thm:seb} for instance), and we aim for a certification scheme using $O(n + \log M)$ bits in total.

\subsection*{Certification.}

Let us describe the certificates given by the prover to the vertices. The prover chooses a spanning tree $\T$ of $G$ rooted at some vertex $r$. As in the proof of Theorem~\ref{thm:seb}, for each vertex~$u$ of~$G$, we denote by~$\T_u$ the subtree of $\T$ rooted at~$u$. The certificate of every vertex~$u$ consists in four parts $\idroot[u]$, $\idparent[u]$, $\distance[u]$, $\subtree[u]$. The parts $\idroot[u]$, $\idparent[u]$ and $\distance[u]$ are defined exactly as in the proof of Theorem~\ref{thm:seb}, and only $\subtree[u]$ differs. In $\subtree[u]$, the prover writes a binary vector of $n$ bits whose $j$-th bit is equal to~1 if and only if there exists a vertex $v \in \T_u$ such that $j \in S_v$. %\textcolor{red}{This is assuming that the certification is correct}

The three first parts of the certificates have size $O(\log M)$, and the $\subtree$ part has size $O(n)$, so the overall size of the certificates is $O(n + \log M)$.

\subsection*{Verification of the certificates.}

 Let us now explain how the vertices check the correctness of their certificates. The definitions of~\emph{parent}, \emph{child} and~\emph{leaf} are the same as in the proof of Theorem~\ref{thm:seb}. As in the proof of Theorem~\ref{thm:seb}, the verification procedure consists of several steps. Steps~(i) and~(ii) are identical, and there are finally three last steps~(iii), (iv) and~(v) which are the following:
 \begin{enumerate}
   
    \item[(iii)] $u$ checks that its $\subtree$ vector have size~$n$ (recall that we assumed that each vertex knows~$n$, since it can be certified with $O(\log n)$ additional bits)
     \item[(iv)] If $u$ is a leaf, it checks that $\subtree[u]_j=1$ if and only if $j \in S_u$ for every integer~$j$. Otherwise, let us denote by $v_1, \ldots, v_k$ the children of $u$. For every integer~$j$, the vertex $u$ checks that one of the three following cases holds:
     \begin{itemize}
         \item $\subtree[u]_j=0$, $j \notin S_u$, and for all $i \in [k]$, $\subtree[v_i]_j=0$, %\textcolor{red}{And it's not its new name}
         or
         \item $\subtree[u]_j=1$, $j \notin S_u$, and there exists a unique $i \in [k]$ such that $\subtree[v_i]_j=1$, or
         \item $\subtree[u]_j=1$, $j \in S_u$, and for all $i \in [k]$, $\subtree[v_i]_j=0$.
     \end{itemize}
     \item[(v)] If $\mathrm{Id}(u)=\idroot$, $u$ checks that $\subtree[u]_j=1$ for all index~$j$.
 \end{enumerate}

\subsection*{Correctness.}

Let us prove the correctness of this certification scheme. First, as in the proof of Theorem~\ref{thm:seb}, if $[n]$ is indeed the disjoint union of the sets $S_u$ given in input, and if the prover gives the certificates as described above, no vertex will reject by definition.

Conversely, assume that all the vertices accept with some sets $S_u$ as input and some certificates. Let us check that $[n]$ is indeed equal to the disjoint union of the sets $S_u$. Exactly as in the proof of Theorem~\ref{thm:seb}, step~(ii) ensures that there exists a vertex $r$ such that $\mathrm{Id}(r)=\idroot$, and we can reconstruct a spanning tree~$\T$ rooted at $r$ that the prover used to assign the certificates. Then, step~(iii) ensures that $\subtree$ is indeed a $n$-bit vector in the certificate of each vertex. Then, step~(iv) ensures that $\subtree[u]_j$ is equal to~1 if and only if there exists a unique $v \in \T_u$ such that $j \in S_v$. Since all the bits of $\subtree[r]$ are equal to~1, it implies that for every $j \in \{1, \ldots, n\}$, there exists a unique vertex $u$ such that $j \in S_u$. Thus, the union of the sets $S_u$ is disjoint, and equal to~$[n]$.
\medskip

This concludes the proof of Theorem \ref{thm:partition}.
\end{proof}

%%%%%%%%%%%%%%%%%%%%%%%%%%%%%%%%%%%%%%%%%%%

%%%%%%%%%%%%%%%%%%%%%%%%%%%%%%%%%%%%%%%%%%%

\section{Applications}
\label{sec:applications}

\subsection{Obtaining optimality in the \texorpdfstring{$\Theta(n)$}{Theta(n)}-bit regime}\label{subsec:optimal-theta-n}

In this subsection, we focus on local certification and prove that our results can be used to 
obtain optimal certification schemes for several classical problems.

\subsubsection*{\texorpdfstring{$K_\ell$}{K_l}-freeness}
Here we make the classical assumption that, unless specified otherwise, the range of identifiers is polynomial in~$n$, that is $M = O(n^c)$ for some $c > 0$. It was proved\footnote{The model considered in \cite{DKO13} is a bit different from local certification, but their proof can easily be adapted to work in this model.} in \cite[Theorem 15]{DKO13} that for any fixed $\ell\ge 4$, locally certifying that an $n$-vertex graph $G$ is $K_\ell$-free (meaning that $G$ does not contain $K_\ell$ as a subgraph) requires certificates of size $\Omega(n)$. On the other hand, $K_\ell$-freeness can easily be locally certified with certificates of $O(n\log n)$ bits: it suffices to assign to each vertex  the list of (identifiers of) its neighbors as a certificate, and then the vertices can readily check the validity of these certificates, and the local view of each vertex $v$ is enough to compute the size of a maximum clique containing $v$ in $G$ (and thus certify that $G$ is $K_\ell$-free, if it is the case).  Storing the list of identifiers of its neighbors can cost up to $\Omega(n\log n)$ bits in the worst case, so with this technique it does not seem to be possible to match the lower bound of $\Omega(n)$ from \cite{DKO13}. We now explain how to obtain an optimal local certification scheme for $K_\ell$-freeness ($\ell$ fixed) with certificates of size $O(n)$, as a simple consequence of Theorem \ref{thm:seb} or \ref{thm:ThmLouis}.

\begin{theorem}
\label{thm:cliques}
For any fixed integer $\ell$, $K_\ell$-freeness can be locally certified with certificates of $O(n)$ bits in  $n$-vertex graphs (and this is optimal).
\end{theorem}

\begin{proof}
Start by renaming the identifiers of $G$ in the range $[n]$ using Theorem \ref{thm:seb} or \ref{thm:ThmLouis}. In either case, the renaming can be locally certified with $O(n)$ bits. 
Now, the certificate of $u$ consists of a binary vector of size $[n]$ whose $j$th coordinate is $1$ if and only if $u$ is adjacent to the vertex whose new identifier is $j$. That is, we encode the neighborhood of $u$ with a vector of size $n$.
As explained above, this information is enough to  decide whether $G$ is $K_\ell$-free, as desired.
\smallskip

    The fact that the size of the certificates is optimal was proved in \cite{DKO13}.
\end{proof}

\subsubsection*{Diameter \texorpdfstring{$\leq \ell$}{at most l}}
Here we make again the assumption that the range of identifiers is polynomial in $n$, that is $M = O(n^c)$ for some $c > 0$. It was proved in \cite{Censor-HillelPP20} that, for any constant~$\ell$, having diameter at most $\ell$ can be certified locally with certificates of $O(n \log n)$ bits. Moreover, they obtained a lower bound of $\Omega(n)$ bits on the certificates for this problem. We obtain the following improvement.

\begin{theorem}
For any fixed integer $\ell$, having diameter at most $\ell$ can be locally certified with certificates of $O(n)$ bits in  $n$-vertex graphs (and this is optimal).
\end{theorem}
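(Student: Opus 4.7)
The plan is to combine Theorem~\ref{thm:seb} with a compact all-pairs distance encoding. The key observation is that once $\ell$ is a fixed constant, every relevant distance fits in $O(1)$ bits, so a full distance vector from one vertex to all others will fit in $O(n)$ bits---\emph{provided} we can address these $n$ coordinates without spending $\Theta(\log n)$ bits per vertex. This is exactly what renaming to $[n]$ provides.

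Concretely, the prover first applies Theorem~\ref{thm:seb}, attaching to every vertex $u$ a renaming certificate of $O(\log M)$ bits that witnesses a valid renaming $\renaming : V(G) \to [n]$. In addition, $u$ receives a vector $D_u \in \{0, 1, \ldots, \ell\}^n$, where $D_u[j]$ is intended to be the graph distance from $u$ to the vertex with renamed identifier $j$. Each entry uses $O(\log \ell) = O(1)$ bits, so the total certificate remains $O(n)$ bits.

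At every vertex $u$, the verification first runs the verification procedure of Theorem~\ref{thm:seb}, and then performs three additional checks: (a) $D_u[\renaming(u)] = 0$; (b) for every $j \neq \renaming(u)$, $D_u[j] = 1 + \min_{v \in N(u)} D_v[j]$, with the minimum read off from the certificates of $u$'s neighbors; and (c) every entry of $D_u$ is at most $\ell$. Completeness is immediate from the honest prover. For soundness, once the renaming is certified correct, a short induction in both directions on the value of $D_u[j]$ shows that $D_u[j]$ must equal the true graph distance from $u$ to the vertex with renamed identifier $j$: the upper bound comes from propagating any $u$-to-$w_j$ path of length $d$ through check~(b), and the lower bound from taking a putative violator $u$ minimizing $D_u[j]$ and using check~(b) to produce a neighbor with a strictly smaller true distance, reaching a contradiction. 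Check~(c) then forces the diameter to be at most $\ell$.

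There is no real obstacle beyond recognizing that Theorem~\ref{thm:seb} plays the same role as in the $K_\ell$-freeness proof: without it, each coordinate of $D_u$ would need to carry a $\Theta(\log n)$-bit identifier of the vertex it refers to, yielding the previously known $O(n \log n)$ bound of~\cite{Censor-HillelPP20}. Optimality follows from the matching $\Omega(n)$ lower bound in the same paper.
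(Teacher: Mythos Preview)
Your proof is correct and follows essentially the same approach as the paper: rename to $[n]$ via Theorem~\ref{thm:seb}, then store at each vertex a length-$n$ distance table with $O(1)$-bit entries indexed by the new identifiers. Your check~(b) is a clean reformulation of the BFS-tree consistency checks the paper invokes from~\cite{Censor-HillelPP20}, and you additionally spell out the two-sided induction for soundness that the paper leaves implicit by citation.
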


\begin{proof}
The upper bound technique in \cite{Censor-HillelPP20} is the following. In a correct instance, every node~$v$ is given as a certificate a table of size $n$. Each entry corresponds to a  node $w$ of the graph and contains the identifier of $w$ and the distance from $v$ to $w$. Checking these tables consists in checking in parallel $n$ BFS trees. More formally, for every vertex $w$, $v$ checks that $d(v,w) \leqslant \ell$, and that least one of its neighbors has distance one less to $w$. For every neighbor~$u$ of~$v$, $v$ also checks that $|d(u,w)-d(v,w)| \leqslant 1$.  Every entry can be encoded on $O(\log n+\log \ell)=O(\log n)$ bits, hence the result. 

In our scheme, we start by renaming the identifiers of $G$ in the range $[n]$ using Theorem \ref{thm:seb} or \ref{thm:ThmLouis}. In either case, the renaming can be locally certified with $O(n)$ bits.
Then on a correctly renamed instance, the prover assigns to every node $v$ a table $T_v$ of $n$ entries, where the entry $T_v[i]$ contains the distance from $v$ to the node with new identifier $i$. The verification can be done as in \cite{Censor-HillelPP20}, and the size of the certificates is $O(n)$.
\end{proof}

\subsubsection*{Distinctness}
In the introduction we have mentioned the \textsc{Distinctness} problem: given a graph~$G$ with identifiers of range $[M]$, and an input function~$i$ of range $[m]$, the goal is to certify that the inputs of all the vertices are distinct (i.e.\ that $i$ is injective). The integer $m$ is not known by the vertices in advance, and it is not a function of~$n$ (recall also that $M$ is a function of~$n$). See \cite{NPY20} for a study of this problem in the interactive version of  local certification. We prove the following result:

\begin{theorem}
\label{thm:distinctness}
The \textsc{Distinctness} property can be locally certified in $n$-vertex graphs with identifier range~$[M]$ and input range~$[m]$ with certificates of size~$O(n + \log M + \log \log m)$.
\end{theorem}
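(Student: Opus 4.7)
}

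The plan is to combine the perfect hashing tool (Theorem~\ref{thm:perfect_hashing}) with the arbitrary-renaming certification of Theorem~\ref{thm:ThmLouis}. Intuitively, perfect hashing will let the prover squeeze the inputs, which live in a huge range $[m]$, down into $[n]$; and Theorem~\ref{thm:ThmLouis} will then let us certify that the resulting values are all distinct, which is exactly distinctness of the original inputs.

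More concretely, suppose that the input function $i : V(G) \to [m]$ is injective, and let $S = \{i(u) : u \in V(G)\} \subseteq [m]$, so $|S| = n$. By Theorem~\ref{thm:perfect_hashing}, there exists a perfect hash family $H_{n,m}$ of size $\lceil n e^n \log m\rceil$, and it contains some function $h : [m] \to [n]$ that is perfect for $S$. Since $h$ is injective on $S$ and $|S| = n$, the restriction $h|_S$ is in fact a bijection onto $[n]$. The prover then does the following: (a)~put in each vertex's certificate the index of~$h$ in~$H_{n,m}$, which takes $\log \lceil n e^n \log m \rceil = O(n + \log \log m)$ bits, and (b)~attach to each vertex $u$ the certificate promised by Theorem~\ref{thm:ThmLouis} for the renaming $u \mapsto h(i(u))$, which takes $O(n + \log M)$ bits. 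The total certificate size is $O(n + \log M + \log \log m)$, as required.

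The verification at a vertex $u$ proceeds as follows. First, $u$ checks that all its neighbors carry the same index of $h$ in $H_{n,m}$; by connectivity, this guarantees that all vertices share a common function $h$. Then $u$ locally evaluates $h(i(u)) \in [n]$, treats this value as its ``new identifier'' from the range $[n]$, and runs the verification procedure from the proof of Theorem~\ref{thm:ThmLouis} on this value together with the second part of its certificate. For correctness: if $i$ is injective, then the $h$ above exists and the Theorem~\ref{thm:ThmLouis} certificates can be produced honestly, so all vertices accept; conversely, if $i$ is not injective, then for any function $h$ that the prover tries, two distinct vertices $u \ne v$ with $i(u) = i(v)$ satisfy $h(i(u)) = h(i(v))$, so $u \mapsto h(i(u))$ is not a permutation of $[n]$ and the Theorem~\ref{thm:ThmLouis} verification must reject at some vertex.

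The only nontrivial point is verifying that the hashing sub-certificate is truly ``for free'': one might worry that the prover could lie about $h$. This is where the design is robust: we never ask the vertices to check that $h$ is genuinely a perfect hash function, only that everyone agrees on the same $h$. The ``perfectness'' is enforced indirectly by the permutation check of Theorem~\ref{thm:ThmLouis}, since $u \mapsto h(i(u))$ can be a permutation only if $h$ is injective on the image of $i$, which in turn forces $i$ itself to be injective. Thus the main (minor) obstacle is simply verifying that $\log(n e^n \log m) = O(n + \log \log m)$, and checking that the two sub-schemes can be verified in parallel from the local view.
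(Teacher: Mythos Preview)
Your proposal is correct and follows essentially the same approach as the paper: encode a perfect hash function $h:[m]\to[n]$ in every certificate (costing $O(n+\log\log m)$ bits) and then invoke Theorem~\ref{thm:ThmLouis} to certify that $h\circ i$ is a permutation of $[n]$ (costing $O(n+\log M)$ bits). Your write-up is in fact more explicit than the paper's on the soundness direction and on the point that the verifier need not check that $h$ is genuinely perfect, since any collision of $i$ forces a collision of $h\circ i$ regardless of $h$.
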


Note that Theorem~\ref{thm:distinctness} is in fact a generalization of Theorem~\ref{thm:ThmLouis}. Indeed, Theorem~\ref{thm:ThmLouis} is just the particular case of Theorem~\ref{thm:distinctness} where $m=n$.

\begin{proof}
    The certificates given by the prover to the vertices consist in three parts.
    First, the prover writes $\lceil \log m \rceil$ in the certificate of every vertex, using $O(\log \log m)$ bits.
    Then, let $m' := 2^{\lceil \log m\rceil}$. We have $m \leqslant m' \leqslant 2m$. The prover chooses a hash function $h : [m'] \rightarrow [n]$ which is perfect for the set of inputs $\{i(u) \; | \; u \in V(G)\}$, and writes this hash function in the certificate of every vertex. By Theorem~\ref{thm:perfect_hashing}, it uses $O(n + \log \log m)$ bits (because $m' \leqslant 2m)$. Finally, the prover uses the certification of Theorem~\ref{thm:ThmLouis} to certify that $h \circ i$ is a correct renaming, which uses $O(n + \log M)$ bits. In total, the certificates are of size $O(n + \log M + \log \log m)$.
    
    The verification procedure of each vertex $v$ is the following.
    First, it checks that the value of $\lceil \log m \rceil$ written in its certificate is the same as in the certificates of its neighbors. Then, it computes $m'$ and checks that $i(v)\le m'$.
    Finally, the last part of the certification just consists in checking whether it received the same hash function~$h$ as its neighbors, and it then performs the verification of Theorem~\ref{thm:ThmLouis} to check that~$h \circ i$ is indeed injective.
    
    Note that the vertices do not need to verify that the value of $\lceil \log m\rceil$ written in its certificate is correct: indeed, the only thing which matters is that they all compute the same integer $m'$ (and that $m'$ is an upper bound for the set of inputs). Note also that the vertices do not need to check that $h$ is perfect for the set of inputs: indeed, if all the vertices accept, even if $h$ is not perfect, all the inputs are distinct since $h \circ i$ is injective.
\end{proof}

\subsubsection*{Anti-triangle-freeness}

Here we make again the assumption that the range of identifiers is polynomial in $n$, that is $M = O(n^c)$ for some $c > 0$.

\begin{theorem}
    \label{thm:anti_triangle}
    Certifying locally that a graph does not have an independent set of size~$3$ can be done with certificates of size~$O(n)$.
\end{theorem}

Note that certifying anti-triangle-freeness (i.e.\ not containing an independent set of size $3$) seems at first sight to be more difficult than certifying triangle-freeness. Indeed, for the latter it is sufficient to write the adjacency of each vertex in its certificate, which can be done with $O(n)$ bits using renaming (see the proof of Theorem~\ref{thm:cliques}), but this is not sufficient anymore for anti-triangle-freeness. However, it can still be done with $O(n)$ bits using renaming techniques.

\begin{proof}
    First, note that a graph~$G$ does not have an independent set of size~$3$ if and only if, for every vertex~$u$, the non-neighborhood of~$u$ (that is, the set $V(G) \setminus N[u]$) is a clique (possibly empty in the case where $u$ is a universal vertex).
    
    Assume that~$G$ does not have an independent set of size~$3$, and let us describe the certificates given by the prover to the vertices. First, the prover starts by certifying a renaming in the range $[n]$ using Theorem~\ref{thm:seb} or~\ref{thm:ThmLouis}, using $O(n)$ bits. Then, it writes in the certificate of every vertex its adjacency vector, that is, a binary vector of size~$n$ whose $j$th coordinate is equal to~$1$ if and only if~$u$ is adjacent to the vertex whose new identifier is~$j$. This also takes $O(n)$ bits.
    Finally, the prover does the following:
    \begin{enumerate}
        \item If there exists a universal vertex in~$G$, it chooses one universal vertex and writes its identifier in the certificates of all the vertices.

        \item If there is no universal vertex in~$G$, then for every $u \in V(G)$, $V(G) \setminus N[u]$ is non-empty, and the prover chooses a vertex~$\pi(u) \in V(G) \setminus N[u]$. For every $v \in V(G)$, let \mbox{$S_v:=\{u \in V(G) \; | \; v = \pi(u)\}$}. The prover then writes, in the certificate of each vertex~$v$, a binary vector denoted by $P[v]$ such that, for every $i \in [n]$, $P[v]_i=1$ if and only if $u \in S_v$, where~$u$ is the vertex whose new identifier is~$i$. Finally, the prover certifies that the sets $(S_v)_{v \in V(G)}$ form a partition of $[n]$ using Theorem~\ref{thm:partition}. In total, this takes~$O(n)$ bits.
    \end{enumerate}

    The verification of the vertices consists in doing the following. Every vertex checks that the renaming is correct, and that its adjacency vector is correctly written in its certificate. If no vertex rejects at this point, then the renaming is correct and every vertex knows the neighborhood of each of its neighbors. Now, there are two cases:

    \begin{enumerate}
        \item If the prover wrote in the certificates that there is a universal vertex~$u$ together with its identifier, every vertex checks that the identifier of the universal vertex $u$ written in its certificate is the same as in the certificate of its neighbors, and that it indeed sees~$u$. If no vertex rejects, then~$u$ is indeed a universal vertex, so $u$ knows all the edges of the graph (because every vertex has its adjacency vector written in its certificate) and rejects if there is an independent set of size~3.

        \item If the prover wrote that there is no universal vertex in the graph, then the vertices first check that $[n]$ is indeed the disjoint union of $\{S_v\}_{v \in V(G)}$ using Theorem~\ref{thm:partition}. Then, remember that our aim is to check that for every vertex~$u$, the non-neighborhood of $u$ is a clique. To do so, for every pair of vertices $u$ and~$v$, the vertex $v$ checks that
        it is a non-neighbor of $u$ if and only if $v$ is a neighbor of $\pi(u)$ or $v$ is $\pi(u)$ itself (note that, for every neighbor~$w$ of~$v$, $v$ can determine whether $w = \pi(u)$ because this is equivalent to $u \in S_w$, and the set $S_w$ is written in the certificate of~$w$).
        Finally, in the case where $v = \pi(u)$, if no vertex rejected at this point, $v$ is adjacent to all the non-neighbors of~$u$, and can thus determine the subgraph induced by $V(G) \setminus N[u]$ (recall that every vertex knows the neighborhood of its neighbors), and reject if it is not a clique.
    \end{enumerate}
    Finally, every vertex which did not reject previously accepts. 

    \smallskip
    
    The certification scheme is correct. Indeed, if all the vertices accept, either there is a universal vertex~$u$ and in this case there is no independent set of size~3 (because $u$ would have rejected), or there is no universal vertex, and in this case, for every vertex~$u \in V(G)$, the vertex $v = \pi(u)$ sees all the non-neighbors of $u$ (otherwise some non-neighbor of~$u$ would have rejected), and~$v$ checked that $V(G) \setminus N[u]$ induces a clique.
\end{proof}

Note that the size of the certificates in Theorem \ref{thm:anti_triangle} is close to  optimal. By adapting a result of \cite{DKO13} proved in the congested clique model, we show in Appendix \ref{sec:app} that triangle-freeness and anti-triangle-freeness both require certificates of $n/e^{O(\sqrt{\log n})}$ bits. 

%Indeed, it was proved in \cite[Corollary 25]{DKO13} that $n/e^{O(\sqrt{\log n})}$ bits are necessary to certify that a graph does not have a triangle. As the model considered in \cite{DKO13} is a bit different from local certification, we provide  a full proof of the result in this setting in Appendix \ref{sec:app}, for future reference. Their construction is based on the existence of an $n$-vertex graph that has $n^2/e^{O(\sqrt{\log n})}$ triangles, and has the property that each edge belongs to exactly one triangle. By considering the complement of this graph, we obtain a connected $n$-vertex graph that has $n^2/e^{O(\sqrt{\log n})}$ anti-triangles, and such that each non-edge belongs to exactly one anti-triangle. The rest of their proof applies verbatim and shows a lower bound of $n/e^{O(\sqrt{\log n})}$ on the size of the certificates for anti-triangle-freeness. 

\subsection{Local certification using universal graphs}

In this subsection, we show how the existence of a universal graph for a monotone graph class can lead to a sub-quadratic certification scheme. Our argument crucially uses renaming and hashing (in fact, it uses the certification procedure for the \textsc{Distinctness} problem presented in Theorem~\ref{thm:distinctness}, which itself uses renaming and hashing). We make again the assumption that the identifier range is polynomial in~$n$, that is $M = O(n^c)$ for some $c > 0$.

A class of graphs~$\G$ is \emph{monotone} if it is closed under taking (non-necessarily induced) subgraphs, that is: for all $G \in \G$, if $G'$ is a subgraph of $G$ (i.e.\ a graph obtained from~$G$ by deleting some set of vertices and edges), then $G' \in \G$. Example of monotone classes include planar graphs, and more generally minor-closed classes, as well as graphs avoiding some fixed subgraph (for instance graphs with no 4-cycle).

\begin{theorem}
    \label{thm:universal}
    Let $\G$ be a monotone class of graphs and let $s : \NN \to \NN$ be a function with the following property: for every $n \in \NN$, there exists an $s(n)$-vertex graph $U^\G_n \in \G$ such that every $n$-vertex graph $G \in \G$ is a subgraph of $U^\G_n$.

    Then, there exists a certification scheme for the property of belonging to $\G$ with certificates of size~$O(n + \log s(n))$.
\end{theorem}

\begin{proof}
    First, note that since $\G$ is monotone and $U^\G_n \in \G$, for every $n \in \NN$, an $n$-vertex graph~$G$ belongs to the class~$\G$ if and only if $G$~is a subgraph of $U^\G_n$.
    
    For every $n \in \NN$, fix a bijection~$i_n : V(U^\G_n) \to [s(n)]$ and consider it as a unique identifier assignment to the vertices of $U^\G_n$.
    This bijection~$i_n$ is fixed before the certification procedure.
    
    We now describe the certificates assigned to the vertices of an $n$-vertex graph $G \in \G$ by the prover. The prover chooses a subgraph $G'$ of~$U^\mathcal{G}_n$ that is an isomorphic copy of~$G$, and identifies the function $\varphi : V(G) \to V(U^\mathcal{G}_n)$ that maps each vertex of $G$ to the corresponding vertex of $G'$ (viewed as a subgraph of $U^\mathcal{G}_n$). For every vertex $v \in V(G)$, the prover writes the identifier $i_n(\varphi(v))$ in the certificate of~$v$, using $O(\log s(n))$ bits.
    Then, using the certification scheme of Theorem~\ref{thm:distinctness}, the prover certifies that the values $i_n(\varphi(v))$ that it wrote in the certificates are distinct for all the vertices $v \in V(G)$, using $O(n + \log \log s(n))$ additional bits. In total, the certificates have size~$O(n + \log s(n))$.

    Let us describe how the vertices check their certificates. First, each vertex
    applies the verification procedure of Theorem~\ref{thm:distinctness}.
    If no vertex rejects at this point, then the identifiers $i_n(\varphi(v))$ are distinct for all $v \in V(G)$, so $\varphi$ is an injective function from $V(G)$ to $V(U^\G_n)$. Then, each vertex $v \in V(G)$ checks that, for each of its neighbors $w$, $\{\varphi(v),\varphi(w)\}$ is an edge of $U^\G_n$, and $v$ accepts if and only if it is the case. 
    
    Finally, note that this certification scheme is correct, because if every vertex accepts, $G$ is a subgraph of $U^\G_n$, so $G \in \G$. Conversely, if $G \in \G$, no vertex will reject with the certificates given as described previously.
\end{proof}

Note that Theorem~\ref{thm:universal} can be applied to obtain a non-trivial upper bound only for monotone classes~$\G$ which have a universal graph $U^\G_n \in \G$ with $2^{o(n^2)}$ vertices. Indeed, if the number of vertices of $U^\G_n \in \G$ is $2^{\Theta(n^2)}$, then the bound given by Theorem~\ref{thm:universal} is $O(n^2)$. This would not be useful, since a universal upper bound for any property is $O(n^2 + \log M)$ and in the present case where $M = O(n^c)$, this universal upper bound is just $O(n^2)$ (the idea of this universal upper bound is simply to certify a renaming to $[n]$ using Theorem~\ref{thm:seb} or~\ref{thm:ThmLouis}, and encoding the adjacency matrix of the graph in the certificate of every vertex).

Unfortunately, to our knowledge, there is no example of  a non-trivial monotone class~$\G$ in which it is known that such a universal graph with $2^{o(n^2)}$ vertices exists. Indeed, the research of universal graphs has received a considerable attention recently, but the constraint that the universal graph belongs to the class it represents is usually not satisfied. For example, \cite{EsperetJM23} builds a sparse universal graph containing all planar graphs of a given size, but this graph is not planar. 
This motivates the following question:

\begin{question}
    \label{question}
    What are the monotone graph classes~$\G$ having a universal graph $U^\G_n \in \G$ with $2^{o(n^2)}$ vertices?
\end{question}

An example of a class for which Theorem~\ref{thm:universal} could be useful is the class of $H$-free graphs (a graph $G$ is $H$-free if it does not contain the graph~$H$ as a subgraph, not necessarily induced). Note that such a class is monotone. For these classes, it is only known that if $H$ is a tree then a logarithmic certification exists (see \cite{BousquetCFPZ24+}, based on~\cite{KorhonenR17}), and that for cliques the optimal size is $\Theta(n)$ (as discussed in Subsection~\ref{subsec:optimal-theta-n}). More bounds are known for forbidden induced subgraphs, in particular for paths~\cite{BousquetCFPZ24+, M24}, but the associated classes are not monotone in general.

One way to obtain universal graphs for $H$-free classes could be the following. In general graphs, it is known that if we take a $2^n$-vertex graph uniformly at random, it will contain almost surely all the $n$-vertex graphs as subgraphs (in particular, if $\G$ is the class of all graphs, for every $n \in \N$ there exists a universal graph $U^\mathcal{G}_n$ for $n$-vertex graphs in $\mathcal{G}$ which has $2^n$ vertices)~\cite{BT81}. 
It would be interesting to determine if the analogous result holds for random $H$-free graphs. Theorem~\ref{thm:universal} would provide a direct application: if such a universal graph for $H$-freeness exists and has $2^{n^c}$ vertices for some $c < 2$, then there is a certification scheme for $H$-freeness with certificates of size~$O(n^c)$.

\subsection{Global certification of \texorpdfstring{$H$}{H}-homomorphism}
\label{sec:global}

In this subsection, we focus on global certification. The property we want to certify is $H$-\textsc{Homomorphism} (the existence of a homomorphism to a given graph $H$).
A particular case which has already been studied in~\cite{FeuilloleyH18} is \textsc{Bipartiteness} (it corresponds to the case where $H$ is an edge).
Note that there exists a local certification scheme for \textsc{Bipartiteness} using only one bit per vertex (where the certificate is the color in a proper 2-coloring, and the verification of every node just consists in checking that it received a different color from all its neighbors).
However, with a global certificate, it is less clear how to certify it optimally. 
In~\cite{FeuilloleyH18}, the authors proved the following upper and lower bounds:

\begin{theorem}
	\label{thm:existing_bounds}
	If $s$ denotes the optimal certificate size for the global certification of \textsc{bipartiteness} for $n$-vertex graphs whose identifiers are in the range $[M]$, then:
	$$s = \Omega(n + \log \log M) \qquad \text{and} \qquad s = O(\min\{M, n \log M\})$$
\end{theorem}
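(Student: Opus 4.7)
The plan is to handle the upper and lower bounds separately, with two constructions for the upper bound and two independent arguments for the lower bound.

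For the upper bound $s = O(n \log M)$, I would proceed as follows. Given a bipartite graph $G$, the prover fixes a proper $2$-coloring $\chi : V(G) \to \{0,1\}$ and writes in the global certificate the list of all $n$ pairs $(\mathrm{Id}(v), \chi(v))$, sorted, say, by identifier. Each pair uses $O(\log M)$ bits, so the total size is $O(n \log M)$. Every vertex $u$ reads the global certificate and, for itself and each of its neighbors $v$, looks up the color associated with $\mathrm{Id}(v)$ in the list; $u$ accepts iff $\chi(u) \neq \chi(v)$ for all $v \in N(u)$, iff every identifier it sees appears in the list, and iff all pairs in the list have distinct identifiers. Soundness comes from the fact that if the verification accepts at every vertex, then in particular the bits attached to the identifiers present in $G$ form a proper $2$-coloring.

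For the upper bound $s = O(M)$, the prover directly writes a bit vector $c \in \{0,1\}^M$, where $c_j$ encodes the color of the vertex whose identifier is $j$ (and is arbitrary if no vertex carries identifier $j$). Each vertex $u$ reads $c_{\mathrm{Id}(u)}$ and compares it with $c_{\mathrm{Id}(v)}$ for each $v \in N(u)$, accepting iff they all differ. Again this is sound because the bits at the positions indexed by identifiers actually used in $G$ must form a proper $2$-coloring.

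For the lower bound $s = \Omega(n)$, I would use a counting / indistinguishability argument on a carefully chosen family of graphs. The idea is to exhibit, for any identifier assignment, at least $2^{\Omega(n)}$ bipartite graphs that look pairwise distinct to the verifier in the sense that no single certificate can simultaneously certify two of them without also accepting a non-bipartite graph. Concretely, one can take two paths of appropriate length and glue them in two different ways, so that one gluing produces a bipartite graph and the other an odd cycle; if certificates were shorter than $\Omega(n)$, pigeonhole would force two such bipartite graphs to receive the same certificate, and a hybrid construction (assembling the accepted local views from one graph with those from the other) would yield a non-bipartite graph that is still accepted. This is the main obstacle: designing the family and the hybrid so that the local views stitch together consistently.

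For the lower bound $s = \Omega(\log \log M)$, I would argue that the certificate must convey some identifier-dependent information. Fix a small non-bipartite graph (e.g.\ a triangle). For each choice of identifiers from $[M]$ placed on this graph, the verifier must reject under every possible certificate, while for a bipartite graph sharing some identifiers with the triangle the verifier must accept under some certificate. A careful adversary argument over identifier assignments then shows that the verification function, combined with the certificate, must discriminate among sufficiently many identifier patterns, forcing $2^s \geq \Omega(\log M)$, i.e.\ $s = \Omega(\log \log M)$. I expect the $\Omega(n)$ bound to be the more delicate part, and I would rely on the construction of \cite{FeuilloleyH18} to handle both lower bounds rigorously.
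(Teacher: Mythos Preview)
This theorem is not proved in the present paper: it is quoted as a result of \cite{FeuilloleyH18} (``In~\cite{FeuilloleyH18}, the authors proved the following upper and lower bounds''), so there is no in-paper proof to compare your proposal against. Your two upper-bound constructions are correct and are exactly the standard ones alluded to in the introduction (the $O(n\log M)$ bound via the list of pairs $(\mathrm{Id}(v),\chi(v))$ is even described verbatim there; the $O(M)$ characteristic-vector argument is the other natural scheme).

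For the lower bounds, what you wrote is only a sketch, and you yourself defer to \cite{FeuilloleyH18} for the actual arguments, which is appropriate here. Two remarks nonetheless. For the $\Omega(n)$ part, the cut-and-paste idea on paths/cycles is the right intuition, but note that in \emph{global} certification all vertices see the same certificate, so the hybrid argument must be set up so that the \emph{local views} (identifier, neighbor identifiers, and the shared certificate) in the glued non-bipartite instance each already occur as an accepting view in one of the two bipartite instances sharing that certificate; getting the identifiers to line up at the gluing points is where the care is needed. For the $\Omega(\log\log M)$ part, your outline (``the certificate must discriminate among sufficiently many identifier patterns'') is too vague to stand on its own; the actual argument in \cite{FeuilloleyH18} uses a Ramsey-type iteration over identifier sets to force a single certificate to work for conflicting colorings, and that is what produces the doubly-logarithmic dependence. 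Citing the reference, as you do, is the right call.
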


The authors of~\cite{FeuilloleyH18} also made the conjecture that their lower bound can be improved to match their upper bound. Namely, in the standard case where $M=O(n^c)$, they conjectured that the answer to following question is positive:

\ConjectureGlobalBipartiteness*

Here, we disprove this conjecture, by proving the following stronger result. The key ingredient used in the proof is perfect hashing (Theorem~\ref{thm:perfect_hashing}).

\begin{theorem}
    \label{thm:graph_homomorphism}
	For any graph $H$, there exists a global certification scheme for $H$-\textsc{Homo\-morphism} for $n$-vertex graphs whose identifiers are in the range $[M]$,  with a certificate of size $O(n \log |V(H)| + \log \log M)$.
\end{theorem}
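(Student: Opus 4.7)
The plan is to encode a homomorphism $\varphi: V(G) \to V(H)$ compactly in the global certificate, using perfect hashing (Theorem~\ref{thm:perfect_hashing}) to compress the association between identifiers and colors. Naively listing pairs (identifier, color) costs $\Theta(n(\log M + \log |V(H)|))$ bits, and the gain comes from first mapping the identifiers to $[n]$ via a hash function whose description uses only $O(n + \log \log M)$ bits.

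First, I would design the certificate. Let $S \subseteq [M]$ denote the set of identifiers (so $|S| = n$). On a yes-instance, fix any homomorphism $\varphi$, and applying Theorem~\ref{thm:perfect_hashing} with $k = n$ and $\ell = M$, choose a hash function $h \in H_{n,M}$ that is perfect on $S$; specifying $h$ by its index in $H_{n,M}$ uses $\lceil \log \lceil n e^n \log M \rceil \rceil = O(n + \log \log M)$ bits. The global certificate consists of this description of $h$ together with an array $C$ of length $n$ with entries in $V(H)$, defined by $C[h(\mathrm{Id}(u))] := \varphi(u)$ for every $u \in V(G)$ (this is well-defined because $h$ is injective on $S$). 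Since $C$ uses $n \log |V(H)|$ bits, the total size is $O(n \log |V(H)| + \log \log M)$ (the additive $O(n)$ from the hash description is absorbed into $n \log |V(H)|$ whenever $|V(H)| \geq 2$; the case $|V(H)| = 1$ reduces to checking that $G$ has no edge if $H$ has no loop, and is otherwise trivial).

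Next, I would specify verification. Each vertex $u$ reads $h$ and $C$ from the global certificate, computes its own tentative color $\psi(u) := C[h(\mathrm{Id}(u))]$, and for every neighbor $v$ (whose identifier is visible in its view) computes $\psi(v) := C[h(\mathrm{Id}(v))]$; it accepts iff $\{\psi(u), \psi(v)\} \in E(H)$ for every neighbor $v$. For completeness, the honest certificate described above gives $\psi = \varphi$, so every edge check succeeds. For soundness, any accepting pair $(h, C)$ induces a global function $\psi: V(G) \to V(H)$ whose defining condition, enforced edge by edge, is exactly that $\psi$ is a homomorphism from $G$ to $H$, so $G$ admits a homomorphism to $H$.

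The one subtlety I expect, rather than a genuine obstacle, is that verification does not need to check that $h$ is perfect on $S$: even if $h$ collides two identifiers, $\psi$ is still a well-defined function, and the edge checks still certify that $\psi$ is a (possibly non-injective) homomorphism $G \to H$. Completeness relies only on the existence of \emph{some} perfect $h$, which Theorem~\ref{thm:perfect_hashing} guarantees, while soundness is independent of the quality of $h$. All savings over the naive $\Theta(n \log M)$-bit encoding therefore come from the logarithmic-in-$M$ size of the perfect hash family $H_{n,M}$.
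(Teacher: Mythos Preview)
Your proposal is correct and follows essentially the same approach as the paper: encode in the global certificate an index of a perfect hash function $h\in H_{n,M}$ together with a length-$n$ array indexed by $h(\mathrm{Id}(\cdot))$ storing the $H$-colors, verify edge-by-edge, and observe for soundness that perfectness of $h$ is irrelevant since $u\mapsto C[h(\mathrm{Id}(u))]$ is a homomorphism whenever all vertices accept. Your explicit handling of the $|V(H)|=1$ edge case and the remark that the $O(n)$ term is absorbed into $n\log|V(H)|$ are minor additions not spelled out in the paper.
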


\begin{corollary}
	\label{cor:new_bound}
	There exists a global certification scheme for \textsc{Bipartiteness} for $n$-vertex graphs whose identifiers are in the range $[M]$, with a certificate of size $O(n + \log \log M)$.
\end{corollary}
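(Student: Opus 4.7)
The plan is to derive this as a direct specialization of Theorem~\ref{thm:graph_homomorphism} to the case $H = K_2$. Recall from the preliminaries that a graph $G$ is bipartite if and only if there exists a homomorphism from $G$ to an edge, i.e., to the clique $K_2$ on two vertices. So \textsc{Bipartiteness} is exactly the property $K_2$-\textsc{Homomorphism}.

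Consequently, I would simply instantiate Theorem~\ref{thm:graph_homomorphism} with $H := K_2$. Its conclusion then yields a global certification scheme for \textsc{Bipartiteness} whose certificate has size
\[
O\bigl(n \log |V(K_2)| + \log \log M\bigr) = O(n \log 2 + \log \log M) = O(n + \log \log M),
\]
since $|V(K_2)| = 2$. This is exactly the bound claimed in Corollary~\ref{cor:new_bound}.

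There is no additional obstacle: the only thing to verify is the translation from \textsc{Bipartiteness} to $K_2$-\textsc{Homomorphism}, which is the standard fact recalled in Section~\ref{sec:definitions}, and then the constant $\log 2 = 1$ is absorbed into the $O(\cdot)$ notation. In particular, this also confirms that Corollary~\ref{cor:new_bound} matches the lower bound $s = \Omega(n + \log \log M)$ of Theorem~\ref{thm:existing_bounds} up to a constant factor, and thereby disproves Conjecture~\ref{conj:n*log n} (which predicted $\Theta(n \log n)$ in the regime $M = \mathrm{poly}(n)$).
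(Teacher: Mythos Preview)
Your proposal is correct and matches the paper's approach: the corollary is stated immediately after Theorem~\ref{thm:graph_homomorphism} as a direct specialization to $H=K_2$, with no separate proof given. Your derivation via $|V(K_2)|=2$ is exactly the intended one-line instantiation.
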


Note that, in the standard case where $M$ is polynomial in $n$, Corollary~\ref{cor:new_bound} gives a certificate of size~$\Theta(n)$. Note also that this bound remains~$\Theta(n)$ even in the case where $M=2^{2^{O(n)}}$.

\begin{proof}[Proof of Theorem~\ref{thm:graph_homomorphism}.]
	Let us describe a global certification scheme for the existence of a homomorphism to $H = (V',E')$ using a certificate of size $O(n \log n' + \log \log M)$, where $n'=|V'|$.
	First, since $H$ has $n'$ vertices, we can number these vertices from $1$ to $n'$ and write the index of a vertex of $H$ on $\log n'$ bits. For every $k, \ell \in \NN$ with $k \leqslant \ell$, by applying Theorem~\ref{thm:perfect_hashing}, we can number the functions in $\mathcal{H}_{k, \ell}$ from $1$ to $|\mathcal{H}_{k,\ell}|$. Thus, a function of $\mathcal{H}_{k, \ell}$ can be represented using $\log |\mathcal{H}_{k,\ell}| = O(k + \log \log \ell)$ bits.
	
    Let $G=(V,E)$ be an $n$-vertex graph, for which there exists a homomorphism $\varphi$ from $G$ to $H$, and let $\mathrm{Id}$ be an identifier assignment of $G$. The certificate given by the prover is the following one.
	Let us denote by $S:=\{\mathrm{Id}(v) \; | \; v \in V\}$ the set of identifiers appearing in~$G$. The set $S$ is included in $[M]$ and has size $n$.
	Let $h \in \mathcal{H}_{n, M}$ be a perfect hash function for $S$. By definition, the function $h$ induces a bijection between $S$ and $[n]$. Let $L$ be the list of size $n$ such that the $i$-th element of $L$, denoted by $L[i]$, is equal to $\varphi(v)$, where $v$ is the unique vertex in $V$ such that $h(\mathrm{Id}(v))=i$.
	The certificate given by the prover to the vertices is the triple $(n, h, L)$, where $h$ is represented by its numbering in $\mathcal{H}_{n, M}$. Since it uses $O(n \log n')$ bits to represent $L$ and $O(n + \log \log M)$ bits to represent~$h$, the overall size of the certificate is $O(n \log n' + \log \log M)$. 
	
	Let us describe the verification algorithm. Each vertex $u$ does the following. First, it reads $n$ in the global certificate and computes $M$ (recall that the identifier range $M$ is a function of~$n$). Then, it determines the hash function $h \in \mathcal{H}_{n, M}$ thanks to its numbering in the certificate. Finally, $u$ accepts if and only if, for all $v \in N(u)$, $\{L[h(\mathrm{Id}(u))],L[h(\mathrm{Id}(v))]\} \in E'$. If it is not the case, $u$ rejects.
	
	Let us prove the correctness of our scheme. First, assume that $G$ indeed admits  a homomorphism to~$H$. Then, by giving the certificate as described above, since $\varphi$ is a homomorphism, each vertex $u \in V$ accepts.
    Conversely, assume that every vertex accepts with some certificate~$c$, and let us prove that there exists a homomorphism from $G$ to~$H$. Since all the vertices accept, every vertex $u$ checked if $\{L[h(\mathrm{Id}(u))],L[h(\mathrm{Id}(v))]\} \in E'$ for every $v \in N(u)$, for some function $h$ which is written in $c$. Note that nothing ensures that $h$ is indeed a perfect hash function for the set $S$ of identifiers, but in fact, it is not necessary to check that $h$ is injective on $S$. Indeed, since every vertex $u$ accepted, then for every $v \in N(u)$, we have $\{L[h(\mathrm{Id}(u))], L[h(\mathrm{Id}(v))]\} \in E'$. So $\varphi(u):= L[h(\mathrm{Id}(u))]$ defines a homomorphism from~$G$ to~$H$. Thus, it proves the correctness of the scheme.
\end{proof}

We note that the idea of using perfect hashing has independently been used in~\cite{EHZ24} with another type of labeling, but to our knowledge, it is the first time that perfect hashing is used in distributed computing. We hope that this technique could have other applications in future works, in particular for problems related to space complexity.

\subsection*{Acknowledgments}
We thank William Kuszmaul for fruitful discussion on hashing. We also thank Ami Paz, Anup Rao and Alexander Sherstov for the discussions on the nondeterministic communication complexity of set disjointness in the number-on-forehead model. Finally, we thank the reviewers for their helpful comments and suggestions.

\bibliographystyle{plain}

\bibliography{biblio-renaming}

\appendix

\section{A lower bound on triangle-freeness}\label{sec:app}

The lower bounds in this section are a translation in the context of local certification of a lower bound obtained in \cite{DKO13}
for triangle-freeness in the congested clique model (see also \cite[Proposition 5]{CFP19} for a brief sketch).

\smallskip

We will use the following result of Ruzsa and Szemerédi \cite{RS78} (see also \cite{AES17} for a more general version, whose parameters we use below for convenience).

 \begin{lemma}[\cite{RS78, AES17}]\label{lem:aes}
For every positive integer $m$, there is tripartite graph $G$ with parts $A,B,C$, each of size $3m$, such that the edges of $G$ can
be partitioned into a family $\mathcal{F}$ of at least
$\tfrac{m^2}{\exp(20\sqrt{\log m})}$ triangles, and moreover, every triangle of $G$ lies in $\mathcal{F}$. 
\end{lemma}

In the \emph{set disjointness} problem $\mathsf{Disj}_\ell$ in the 3-party number-on-forehead model, three players Alice, Bob and Charlie each hold a subset of $[\ell]$ and they have to decide whether the intersection of the three sets is empty. The players see the subsets of the two other players, but they do not see their own subset. The \emph{deterministic communication complexity} of set disjointness in this model is the minimum number of bits the three players have to exchange before Alice can correctly decide whether the intersection of the three sets is empty (for every possible inputs). 

In the  nondeterministic setting, the players start their deterministic protocol with an additional advice string $S$, with the following constraint: if the three sets have empty intersection, then there exists an advice $S$ such that the protocol gives the correct answer, and if some element lies in the three sets, then for every advice $S$ the players give the correct answer. The \emph{nondeterministic communication complexity} minimizes the number of bits exchanged by the players \emph{plus the number of bits of the advice $S$}. We will need the following major result of Rao and Yehudayoff \cite{RY15}.

\begin{theorem}[\cite{RY15}]\label{thm:RY}
    The nondeterministic communication complexity of the set disjointness problem $\mathsf{Disj}_\ell$ in the 3-party number-on-forehead model is $\Omega(\ell)$.
\end{theorem}

We note that this result is not stated explicitly in \cite{RY15} (the result there is stated in the context of deterministic communication complexity), but their lower bound techniques also apply to the nondeterministic setting, as they bound the minimum size of a cover by 1-monochromatic cylinder intersections (see also \cite[Theorem 5.12]{RY20}). As pointed out to us by Alexander Sherstov, the result of Theorem \ref{thm:RY} can also be deduced explicitly from \cite[Theorem 5.13]{She25} by setting $f=\textsf{AND}_n$, $\epsilon=2^{-n}$, and $m$ to be a sufficiently large constant.

\medskip

We can now prove the following result.

\begin{theorem}\label{thm:trianglefree}
The local certification of triangle-freeness in  $n$-vertex graphs  requires certificates of size
$n/\exp(O(\sqrt{\log n}))$.
\end{theorem}

\begin{proof} 
Let $\ell$ be an integer. Assume that there is a local certification scheme for triangle-freeness with certificates of size $t(n)$ in $n$-vertex graphs. 
We use it to design a nondeterministic protocol for the set-disjointness problem $\textsf{Disj}_\ell$ in the 3-party number-on-forehead model.

Apply Lemma \ref{lem:aes}, and
  obtain, for $n=\sqrt{\ell}\exp(\Theta(\sqrt{\log \ell}))$ divisible by 9, a tripartite graph $F$ on $n$
  vertices, with parts $A,B,C$ of size $n/3$, and such that the edges of $F$ can
be partitioned into a family $\mathcal{F}$ of  
$\ell= \tfrac{n^2}{\exp(\Theta(\sqrt{\log n}))}$ triangles of $F$, and moreover, every triangle in $F$ lies in
$\mathcal{F}$. 

Let us denote by $\{T_i : i\in  [\ell]\}$ the $\ell$ triangles of $F$
in the family $\mathcal F$. Each edge $e$ of $F$ lies in exactly one triangle $T_i$,
and such an integer $i$ is called the \emph{index} of $e$. 

Assume Alice, Bob and Charlie are given subsets $X_A,X_B,X_C\subseteq [\ell]$. Their goal is to decide whether their three sets have empty intersection. Let $X=(X_A,X_B,X_C)\subseteq [\ell]^3$, and  define the spanning subgraph $G_X$ of $F$, where:
\begin{itemize}
    \item an edge of $F$ between $A$ and $B$ lies in $G_X$ if and only if its index lies in $X_C$;
    \item an edge of $F$ between $B$ and $C$ lies in $G_X$ if and only if its index lies in $X_A$;
    \item an edge of $F$ between $A$ and $C$ lies in $G_X$ if and only if its index lies in $X_B$.
\end{itemize}

Note that a triangle $T_i$ of $F$ survives in $G_{X}$ if and only if
$i \in X_A\cap X_B\cap X_C$, and thus $G_{X}$ is triangle-free 
if and only if $X_A\cap X_B\cap X_C= \emptyset$.

\smallskip

Consider a local certification scheme for triangle-freeness in $G_X$ with certificates of size $t(n)$. Then the total number of bits of
certificates assigned to the vertices 
in $G_{X}$ is $n\cdot t(n)$. The certificates of all the vertices of $G_X$ are given to Alice, Bob, and Charlie as advice. 

Alice simulates the vertices of $A$, Bob simulates the vertices of $B$, and Charlie simulates the vertices of $C$. Since each player sees the sets of $X$ associated to the other two players, they know exactly which edges of $F$ incident to their vertex set exist in $G_X$. In particular, given the advice, each player can simulate the local verification process at each of their vertices, and return that the three sets $X_A,X_B,X_C$ have empty intersection if and only if all the vertices in their vertex set ($A$, $B$, or $C$) declare that $G_X$ is triangle-free. 
Bob and Charlie can then each send a single bit to Alice, giving their answer, and Alice can then correctly decide whether   $X_A,X_B,X_C$ have empty intersection.
This shows the correctness of the nondeterministic protocol for $\textsf{Disj}_\ell$ in the 3-party number-on-forehead model, given a local certification scheme for triangle-freeness in $n$-vertex graphs with certificates of size $t(n)$.

The complexity of the protocol is $n\cdot t(n)+2=\Omega(n \cdot t(n))$.  By Theorem \ref{thm:RY}, there is an absolute constant $c>0$ such that $n\cdot t(n)\ge c \ell$ and thus \[t(n)\ge c \ell/n=n/\exp(O(\sqrt{\log n})),\] as desired.
\end{proof}

By considering the complement of $F$ and $G_X$ instead of $F$ and $G_X$ in the proof above, we immediately obtain the following analogous result for anti-triangle-freeness.

\begin{theorem}\label{thm:antitrianglefree}
The local certification of anti-triangle-freeness in  $n$-vertex graphs  requires certificates of size
$n/\exp(O(\sqrt{\log n}))$.
\end{theorem}

\end{document}